\newtheorem{theorem}{Theorem}[]
\newtheorem{lemma}[theorem]{Lemma}
\renewenvironment{proof}[1][Proof]{\begin{trivlist}
\item[\hskip \labelsep {\bfseries #1}]}{\end{trivlist}}
\begin{document}

\title{Combinatorial Entropy Encoding} 


%


\author{\normalsize Abu~Bakar~Siddique* and Muhammad~Umar~Farooq\textdagger \\
\textdagger University of Engineering and Technology Lahore, Pakistan \\
*Department of Electrical Engineering, RCET  \\  
University of Engineering and Technology Lahore, Pakistan \\
*abubakar@uet.edu.pk,~\textdagger mufarooq40@gmail.com   
}


\maketitle
\thispagestyle{empty} 


\begin{abstract}
This paper proposes a novel entropy encoding technique for lossless data compression. Representing a message string by its lexicographic index in the permutations of its symbols results in a compressed version matching Shannon entropy of the message. Commercial data compression standards make use of Huffman or arithmetic coding at some stage of the compression process. In the proposed method, like arithmetic coding entire string is mapped to an integer but is not based on fractional numbers. Unlike both arithmetic and Huffman coding no prior entropy model of the source is required. Simple intuitive algorithm based on multinomial coefficients is developed for entropy encoding that adoptively uses low number of bits for more frequent symbols. Correctness of the algorithm is demonstrated by an example.
\end{abstract}

\begin{IEEEkeywords}
Data Compression, Shannon Information, Huffman Coding, Entropy Coding 
\end{IEEEkeywords}

\section{Introduction}
Efficient storage and communication of ever growing digital content has kept researches busy designing new techniques for compressing this information efficiently. The ubiquitous presence of multimedia enabled communication devices produce enormous amount of data to be stored and communicated efficiently. Research in bio-informatics has introduced whole new dimension where large amount of DNA information has to be processed.  The handling, storage and transmission of these data is already beyond current available resources.

The science of data compression originated with Shannon's famous paper on information theory \cite{shannon2001mathematical} where he introduced entropy as measure of the information. 
Considering the source generating discrete set of symbols each of which is independent and identically distributed random variable, he showed that it is not possible to compress the source coding rate (average bits per symbol) below the Shannon entropy of the source and be able to reconstruct the original message without the loss of information. To define entropy suppose that a message is made up of $t$ alphabets in set $A={\alpha_1,\alpha_2,\cdots,\alpha_t}$. If the alphabet frequencies are ${f_1,f_2,\cdots,f_t}$ respectively with message length $n=f_1+f_2+\cdots+f_t$ then Shannon entropy of the message is



  	\begin{eqnarray}
  		\label{eqn:ShannonEntropy}
  		 H&=&\sum_{i=1}^t p_i \log_2 1/p_i\\
  		   &=& -\sum_{i=1}^t p_i \log_2 p_i \nonumber
	\end{eqnarray}

where $p_i = f_i /n$\\

Shannon, along with Fano, invented a variable length prefix coding scheme to compress data by assigning shorter codes to more frequent symbols and longer codes to rarely occurring symbols. A slightly improved version was presented by Huffman \cite{huffman1952method} that constructed the code assignment tree from leaves to the root, in contrast to the Shannon-Fano coding tree that built from root to the leaves. Huffman encoding is sometimes used directly to compress the data while more often it is used as final step to compress symbols after other compression algorithm has taken place. Tools like PKZIP, JPEG, MP3, BZIP2 make use of Huffman coding. For efficient compression, in the original Huffman encoding, it is required that the character probabilities be known at the start of the compression process. This can come either from an accurate estimate of the information source or from statistical pre-processing of the data before putting it to compression algorithm. This encoding approaches the Shannon limit when the data size, $n$, is large and the character probabilities are powers of $1/2$. It is least efficient when the data consists of binary symbols since each Huffman symbol is integer number of bits in size. Such is the problem with other prefix codes such as Golomb, Elias and unary coding as only rarely symbol probabilities obey this condition in practice. 

To solve the problem of inefficient Huffman encoding due to integer bits another statistical based data compression scheme known as arithmetic encoding \cite{witten1987arithmetic}was introduced. Though theoretically promising, its use remained initially limited because of the patent issues. It encodes all the data into one fractional number, represented by bits no more than the Shannon limit. It's beauty is that it works even for binary alphabet, assigning fractional bits to symbol. One drawback is that the complete message is processed with no inter-alphabet boundaries. Since digital computers are more comfortable working with integers, the fractional ranges have to be frequently converted to integer ranges in practice. Some of these limitations were solved by range coding \cite{martin1979range}. However these are still costly because of multiplication operations.

 The author was unable to find a statistical encoding scheme that mentions or uses combinatorics to compress the data. Previously the source statistics are defined in terms of the symbol probabilities, the fractional numbers. Our point of view is to consider a message as one permutation of its constituent symbols. Given the symbols and their frequencies in the message, we can represent the input data with a small integer denoting the lexicographic position of the message in list of permutations of its symbols. This index takes far fewer bits than the original message and achieves perfect compression. Contribution of this paper is also to develop an logarithm that calculates the dictionary index of a message as permutation of its constituent symbols with given frequencies for message encoding and vice versa for decoding. The index calculated this way remains below the Shannon entropy of the message.  

\begin{figure*}[!h]
\centering
\includegraphics[width = 4in]{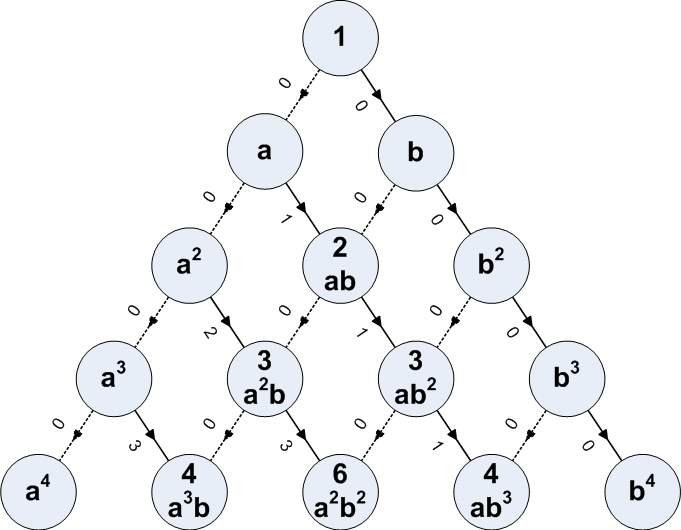}
\caption{Binary source data compression using the Pascal's triangle. In this directed acyclic graph, node pascal coefficient gives the number of paths from the root node.  Powers of \textbf{a} and \textbf{b} tell the frequencies of these binary symbols in the message and correspond to number of left and right turns, respectively, taken to reach from root node. Starting at root node \textbf{1} and traversing the graph by taking a left turn on each \textbf{a} and right turn for each \textbf{b}, the sum of edge weights gives lexicographic order of the message. For example the message \textbf{``aabb''} would lead to node $\mathbf{6a^2b^2}$ and maps to lexicographic index $0+0+2+3=\mathbf{5}$. The message \textbf{``bbaa''} would map to $0+0+0+0=\mathbf{0}$.}
\label{fig:iterative}
\end{figure*}

The paper presents a data compression technique that to the best knowledge of the author has not been presented before. The message compression is achieved by combinadics, a mixed radix numbering system. The algorithm treats the input message as one permutation of its constituent symbols. Like both Huffman encoding and Arithmetic encoding an alphabet frequency list is needed for decompression of the message. The compression however does not need pre-processing or statistical modeling of the information source. Rather the frequency of each letter encountered at the time of compression is incremented and the compression process continues at the mean time. Compression has linear time complexity  $\mathbf{O}(n)$ while decompression is $\mathbf{O}(t\cdot n)$. 

 In this paper  $n$ represents the message length, $t$ represents the alphabet size. The set of alphabets $A=(\alpha_1,\alpha_2,\cdots,\alpha_t)$ is replaced with set $(0,1,2,\cdots, t-1)$ to use with message encoding examples. Each occurrence of an alphabet is termed as a message symbol. There are $n$ symbols in $n$ length message.


\section{Compression}

\begin{figure*}
\centering
  \begin{subfigure}[b]{.3\textwidth}
   \includegraphics[width=\textwidth]{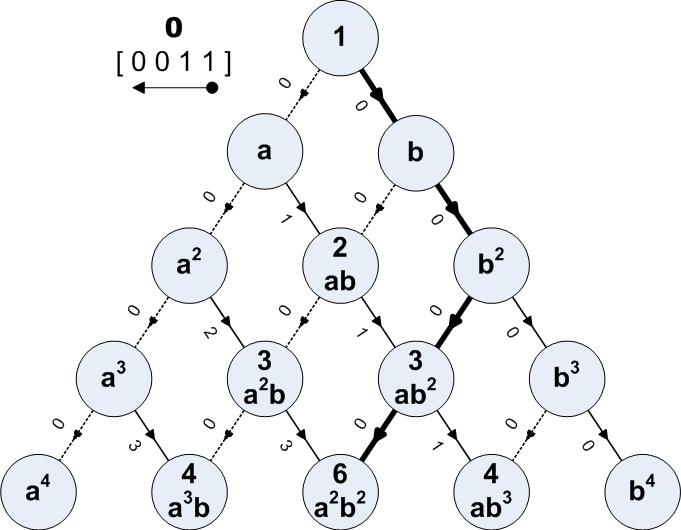}
   \label{subfig:b} 
   \caption{Input  $[0011]$ mapped to [0], \\ Path traversal from root: RRLL,\\Sum of edge weights: 0+0+0+0=\textbf{0}.}
  \end{subfigure}
 ~
  \begin{subfigure}[b]{.3\textwidth}
   \includegraphics[width=\textwidth]{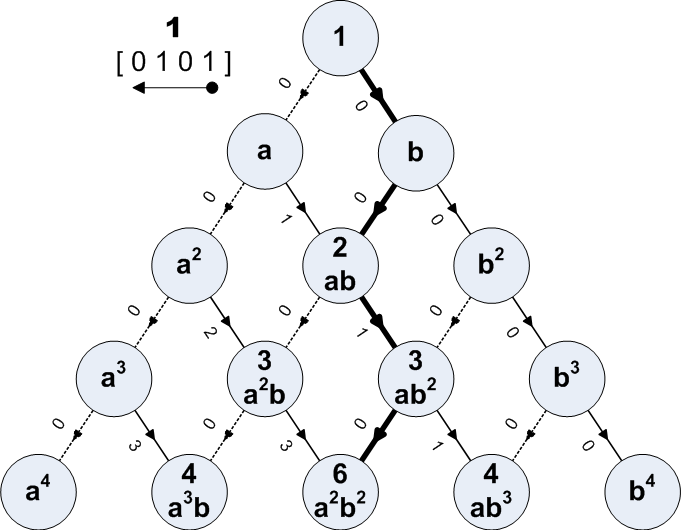}
   \label{subfig:b} 
   \caption{Input  $[0101]$ mapped to [1],
   		 \\ Path traversal from root: RLRL,
   		  \\Sum of edge weights: 0+0+1+0=\textbf{1}}  
\end{subfigure}
 ~
  \begin{subfigure}[b]{.3\textwidth}
   \includegraphics[width=\textwidth]{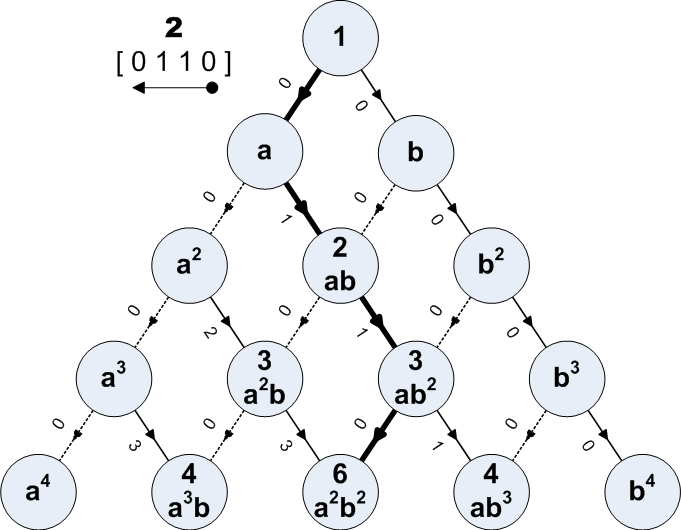}
   \label{subfig:b} 
   \caption{Input  $[0110]$ mapped to [2],
   		 \\ Path traversal from root: LRRL,
   		  \\Sum of edge weights: 0+1+1+0=\textbf{2}}  
  \end{subfigure}
	
 \vspace{2em}

 \centering
  \begin{subfigure}[b]{.3\textwidth}
   \includegraphics[width=\textwidth]{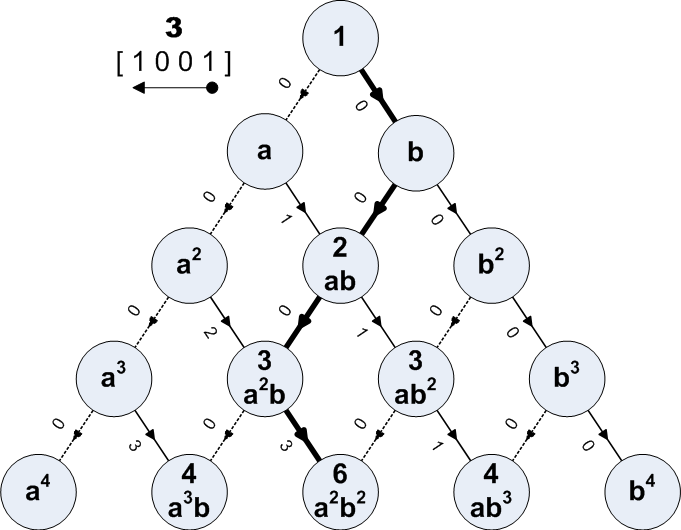}
   \label{subfig:b} 
   \caption{Input  $[1001]$ mapped to [3],
   		 \\ Path traversal from root: RLLR,
   		  \\Sum of edge weights: 0+0+0+3=\textbf{3}}  
  \end{subfigure}
 ~
  \begin{subfigure}[b]{.3\textwidth}
   \includegraphics[width=\textwidth]{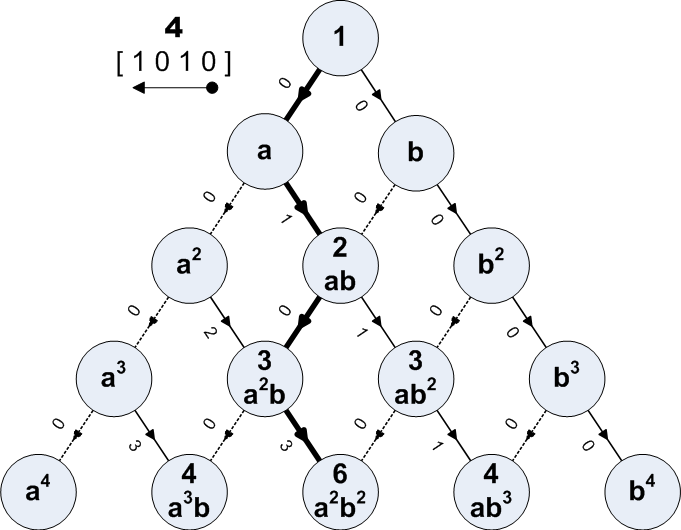}
   \label{subfig:b} 
   \caption{Input  $[1010]$ mapped to [4],
   		 \\ Path traversal from root: LRLR,
   		  \\Sum of edge weights: 0+1+0+3=\textbf{4}}  
  \end{subfigure}
 ~
  \begin{subfigure}[b]{.3\textwidth}
   \includegraphics[width=\textwidth]{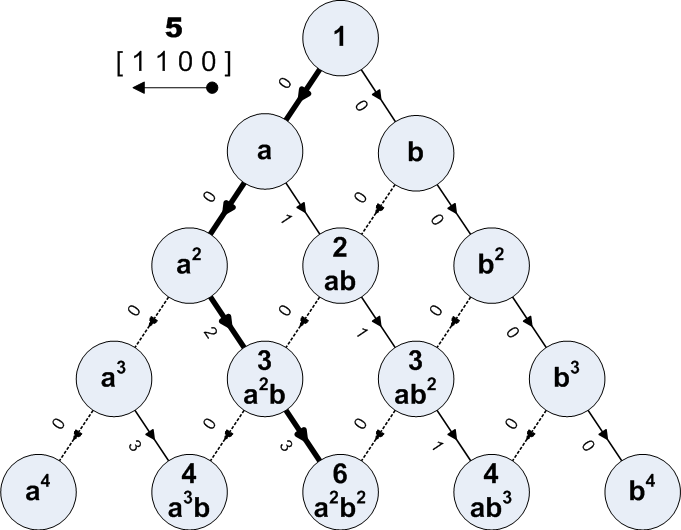}
   \label{subfig:b} 
   \caption{Input  $[1100]$ mapped to [5],
   		 \\ Path traversal from root: LLRR,
   		  \\Sum of edge weights: 0+0+2+3=\textbf{5}}  
  \end{subfigure}

 \caption{Encoding for binary messages of length $n=4$. Traveresal starts at root node. For the input stream we starts from Least Significant Bit (LSB). If a one is encountered in the input stream Right (R) turn is taken in graph traversal and Left (L) turn for a zero. Because all six streams have 2 zeroes (a's) and 2 ones (b's), we end up at node $\mathbf{6a^2b^2}$ in each case. Unlike Huffman/Arithmatic/Range encoding, no prior source model or alphabet probablity knowledge is requiered. Rather respective alphabet frequency counters are incremented for each new symbol arrival.}
\label{fig:binaryExamples}
\end{figure*}

The data compression is treated as combinatorics problem in this paper. When alphabet frequences in the message are not known then for a message length $n$ and alphabet size $t$ the number of all possible messages is $n^t$. For example if message length is six $(n=6)$ and three alphabets $(t=3)$ are used from set $A = (0,1,2)$, the number of possible messages would be $3^5 = 243$. The binary bits required to represent the message can be calulated as $\log_2 (3^5)= 9.5098$. With 6 symbols long message, the average information per symbol is $9.5098/6=1.5850$ bits/symbol. The message can not be decodeably represented with fewer than 10 bits if no other statistics information is available. 

If the probabilites of the alphabets are known in advance then the number of bits requiered to span all possible messages would be much lower as given by Shannon source coding limit. If the alphabets $(0,1,2)$ appear in the message with probabilities ${3/6,1/6,2/6}$ respectively, the optimum symbol size by Shannon source coding theorem in Equation-\ref{eqn:ShannonEntropy}

\begin{equation} 
	\label{bananaEntropy}
  	\begin{array}{lcl}
  		 H&=&-\left( 3/6 \log_2 (3/6)+1/6 \log_2 (1/6)+2/6 \log_2 (2/6) \right) \\
  		   &=& 1.4591 \text{Bits/Symbol}\\
	\end{array}
\end{equation} 

Reducing the average information per symbol to the Shannon limit is the goal of every compression method. 

Compression ratio is a measure of efficiency for a compression method. It is the ratio

\begin{equation}
	\text{Compression Ratio} = \frac{\text{Uncompressed Size}}{\text{Compressed Size}}
\end{equation}
A closely related measure quanifying the percentage amount of space saved by the compression method is
\begin{equation}	
	\text{Space Saving (\%age)} = (1-\frac{\text{Compressed Size}}{\text{Uncompressed Size}})\times 100 
\end{equation}
The compression ratio in our case is $1.5850/1.4591 = 1.0863$ percent or 7.94\% reduction in the amount of space needed to store the mesage. 

This paper treats the compression as a combinatorial problem. We evaluate the total permutations of the message given the alphabets used and their respective occurance frequencies in the messsage. The same statistical information is used by Shannon, Huffman and arithmetic encoding as alphabet probability of occurance in the message. Instead of working with alphabet probabilities we work with the alphabet frequencies. Taking the previous example further,  calculate the number of permutations of 6-symbol message formed by three alphabets, appearing in the message with frequencies three, one and two say. The permutations calculation is the same as combinatorial problem of placing 6 balls: say three red, two green and one blue, in 6 urns. The enumerations are calculated using the multinomial formula

\begin{eqnarray}
	\label{eqn:multinomial}
P(n,f_1,f_2,\cdots,f_t)&=& \binom{n}{f_1,f_2,\cdots,f_t} \\
&=& \frac{n!}{f_1! f_2!\cdots f_t!} \nonumber
\end{eqnarray}
where 
$$ n = f_1+f_2+\cdots+f_t$$

The number of permutations for the example using Equation-\ref{eqn:multinomial} would be

\begin{eqnarray}
\binom{6}{3,1,2} &=& \frac{6!}{3!\cdot 1!\cdot 2!} \\
&=& \frac{6\cdot 5\cdot 4\cdot 3\cdot 2\cdot 1}{(3\cdot 2\cdot 1)\cdot (1)\cdot (2\cdot 1)}\nonumber \\
&=&60 \nonumber
\end{eqnarray}

A unique ID can be attributed to each of these 60 permutations using  $(  \log_2(60) ) = 5.9069$. Because a binary number can not have fractional number of bits therefore 6-bits would suffice to uniquely identify each permutation.  As a comparasion, this is much smaller than the Shannon length  $6\times 1.4591=8.7546=$ 9-bits. It achieves the space saving of  ratio $1/1.5850=63.09 \%$. 

It essentially means that we need a mechanism to establish bijection between the message permutation and its order in the list of all the permutations of given alphabet with given frequency counts. That number represented in biary format  would correspond to the origional message. The best selection would be the entry number in the dictionary of all permutations of the origional message. Algorithms exist that can find the next permutation in lexicographic order given the present permutation. However, to the best of the author's knowledge, no algorithm has been presented in literature that can find the permutation lexicographic position given the message or vice versa. This paper presents origional algorithm to map a message to and from its lexicographic order with simple method. A mixed radix number system consisting of multinomial coefficients is developed to perform this mapping.

The reader might get skeptic of getting a bit representation smaller than the Shannon limit in our toy example. The combinatorial compression would approaches the Shannon limit as $n$ approaches $\infty$ for uniform alphabet frequencies but remaining always below the Shannon limit.

For a comparasion the Shannon formula can be re-arranged to enumerate the corresponding patterns for given alphabet probabilities $p_1,p_2,\cdots,p_t$.
\begin{align}
H &= p_1\log_2 (1/p_1)+p_2\log_2 (1/p_2)+\cdots+p_t\log_2 (1/p_t) \\
    & = \log_2(\frac{1}{p_1^{p_1}\cdot p_2^{p_2}\cdots p_t^{p_t} }) \nonumber 
\end{align}
As $H$ is the average length per symbol, $n\cdot H$ would give the total message length and $2^{nH}$ will give the total number of possible messages with $nH$ bits.

\begin{align}
  n\cdot H &=n\cdot  \log_2(\frac{1}{p_1^{p_1}\cdot p_2^{p_2}\cdots p_t^{p_t} } )\\
   2^{n\cdot H} &=  (\frac{1}{p_1^{p_1}\cdot p_2^{p_2}\cdots p_t^{p_t} } )^n  \\
      2^{n\cdot H} &=  (\frac{1}{p_1^{n\cdot p_1}\cdot p_2^{n\cdot p_2}\cdots p_t^{n\cdot p_t} } ) 
\end{align}

Using the relation between alphabet frequency and its probability $p_i = f_i/n$

\begin{align}
      2^{n\cdot H} &=  \frac{1}{(f_1/n)^{f_1}\cdot (f_2/n)^{f_2}\cdots (f_t/n)^{f_t} } \\
      2^{n\cdot H} &=  \frac{n^{f_1}\cdot n^{f_2}\cdots n^{f_t}}{(f_1)^{f_1}\cdot (f_2)^{f_2}\cdots (f_t)^{f_t} } \\
      2^{n\cdot H} &=  \frac{n^{f_1+f_2+\cdots+f_t}}{(f_1)^{f_1}\cdot (f_2)^{f_2}\cdots (f_t)^{f_t} } \\      
      2^{n\cdot H} &=  \frac{n^n}{f_1^{f_1}\cdot f_2^{f_2}\cdots f_t^{f_t} }  \label{eqn:3} 
\end{align}

It is trivial to prove that Equation-\ref{eqn:3} is always greater than Equation-\ref{eqn:multinomial} for all $n$ and $f_i$.

\begin{lemma}
\label{lemma:lemma1}
\begin{eqnarray}
 \frac{n^n}{f_1^{f_1}\cdot f_2^{f_2}\cdots f_t^{f_t} }  > \frac{n!}{f_1!\cdot f_2! \cdots f_t!}
\end{eqnarray}
 for all $n$ and $f_i$. Where $\sum_{i=1}^t f_i = n$
\end{lemma}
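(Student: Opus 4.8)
\medskip
\noindent\emph{Proof proposal.}
The plan is to read the inequality off the multinomial theorem rather than to wrestle with factorials directly. The key observation is that, since $n=f_1+\cdots+f_t$, the left-hand quantity $n^n/(f_1^{f_1}\cdots f_t^{f_t})$ is just $(f_1+\cdots+f_t)^n$ divided by $\prod_i f_i^{f_i}$. Expanding $(f_1+\cdots+f_t)^n$ by the multinomial theorem writes $n^n$ as a sum of nonnegative terms indexed by the compositions $k_1+\cdots+k_t=n$, and the single composition $k_i=f_i$ contributes exactly $\binom{n}{f_1,\dots,f_t}\prod_i f_i^{f_i}$. Thus the right-hand side of the lemma, after multiplying through by $\prod_i f_i^{f_i}$, is literally one summand in an expansion of the left-hand side multiplied by $\prod_i f_i^{f_i}$, and the strict inequality follows as soon as one other summand is shown to be nonzero.

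Carrying this out, I would first drop every alphabet with $f_i=0$ (such a factor equals $1$ on both sides under the usual conventions $0^0=1$ and $0!=1$) and then assume $t\ge 2$ with every $f_i\ge 1$; when $t=1$ one has $f_1=n$ and both sides equal $1$, so the inequality degenerates to an equality and that case must be excluded — harmless, as a single-symbol alphabet carries no information. Next, apply the multinomial theorem with exponent $n$ and variables $f_1,\dots,f_t$:
\[
 n^n=(f_1+\cdots+f_t)^n=\sum_{k_1+\cdots+k_t=n}\binom{n}{k_1,\dots,k_t}\prod_{i=1}^t f_i^{k_i}.
\]
Each term is a positive integer because every $f_i\ge 1$. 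Pulling out the term with $k_i=f_i$ for all $i$ (a legitimate composition of $n$) and keeping just one further term, the one from the composition $(n,0,\dots,0)$, which equals $f_1^{n}>0$ and is distinct from the pulled-out term precisely because $t\ge 2$ and $f_2\ge 1$, gives
\[
 n^n>\binom{n}{f_1,\dots,f_t}\prod_{i=1}^t f_i^{f_i}=\frac{n!}{f_1!\cdots f_t!}\prod_{i=1}^t f_i^{f_i}.
\]
Dividing both sides by $\prod_i f_i^{f_i}>0$ yields the claim.

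I do not anticipate a real obstacle; the only point needing care is the boundary case of a single nonzero frequency, where the inequality is tight, so the statement should be understood to hold whenever at least two of the $f_i$ are nonzero. If one wishes to avoid quoting the multinomial theorem, the same proof can be run by induction on $t$: the base case $t=2$ is the binomial theorem, $n^n=(f_1+f_2)^n=\sum_{j=0}^n\binom{n}{j}f_1^{j}f_2^{n-j}>\binom{n}{f_1}f_1^{f_1}f_2^{f_2}$, and the inductive step absorbs $f_1,\dots,f_{t-1}$ into a single block of size $n-f_t$, peels off $f_t$ via the two-part case, and then re-splits the block using the inductive hypothesis. I would make the multinomial-theorem version the main argument and relegate the inductive one to a remark.
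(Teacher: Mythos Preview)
Your argument is correct and is essentially the paper's own proof: both rearrange to $n^n>\binom{n}{f_1,\dots,f_t}\prod_i f_i^{f_i}$ and observe that the right-hand side is a single term in the multinomial expansion of $(f_1+\cdots+f_t)^n$. You are in fact more careful than the paper, since you explicitly isolate a second nonzero term to justify the strict inequality and flag the degenerate $t=1$ case where equality holds.
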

 
\begin{proof}
The inequality can be rearranged to the form
\begin{eqnarray}
 n^n  > \frac{n!}{f_1!\cdot f_2! \cdots f_t!} {f_1^{f_1}\cdot f_2^{f_2}\cdots f_t^{f_t} }\\
 (f_1+\cdots+f_t)^n  > \binom{n}{f_1,\cdots,f_t!} {f_1^{f_1}\cdots f_t^{f_t} } \label{eqn:HProof}
\end{eqnarray}

Clearly (\ref{eqn:HProof}) is true because right hand side is a term in the expansion of left hand side. Q.E.D.

\end{proof}

\begin{theorem}

Let a source stream of independent identically distributed random variables  of length $n$, of which each symbol comes from the set $A=\left\{ \alpha_1,\alpha_2,\cdots,\alpha_t \right\}$ with frequencies $(f_1,f_2,\cdots,f_t)$ respectivly. 
It is possible to compress the origional message to $\log_2 \binom{n}{f_1,f_2,\cdots,f_t}<nH$ bits with lossless reconstruction for all $n$ and $f_i$.
\end{theorem}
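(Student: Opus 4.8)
The plan is to split the claim into two essentially independent parts: the numerical inequality $\log_2\binom{n}{f_1,\ldots,f_t}<nH$, and the algorithmic assertion that a message really can be stored in that many bits and then recovered exactly. The first part I would dispose of immediately by assembling two facts already in hand. Equation~(\ref{eqn:3}) gives the identity $2^{nH}=n^n/(f_1^{f_1}f_2^{f_2}\cdots f_t^{f_t})$, and Lemma~\ref{lemma:lemma1} states that this quantity strictly exceeds $\binom{n}{f_1,\ldots,f_t}=n!/(f_1!\cdots f_t!)$ for every admissible $n$ and $f_i$. Taking base-$2$ logarithms of Lemma~\ref{lemma:lemma1}, monotonicity of $\log_2$ yields $nH=\log_2 2^{nH}>\log_2\binom{n}{f_1,\ldots,f_t}$, which is the stated bound.

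For achievability I would exhibit an explicit bijection $\phi$ between the set $M$ of all length-$n$ strings over $A$ with the prescribed symbol counts $(f_1,\ldots,f_t)$ and the integer set $\{0,1,\ldots,\binom{n}{f_1,\ldots,f_t}-1\}$. Since $|M|=\binom{n}{f_1,\ldots,f_t}$, any such bijection lets us encode a message as the binary expansion of $\phi(\text{message})$, using $\lceil\log_2\binom{n}{f_1,\ldots,f_t}\rceil$ bits, and decode it by applying $\phi^{-1}$; together with the first part this is exactly the theorem. The map $\phi$ is the lexicographic rank: reading the message left to right, at a position whose residual multiset has counts $(g_1,\ldots,g_t)$ over $m$ remaining slots and whose current symbol is $\alpha_j$, one adds to the running total $\sum_{i<j}\binom{m-1}{g_1,\ldots,g_i-1,\ldots,g_t}$, then decrements $g_j$ and continues. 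This is precisely the mixed-radix / combinadic construction the paper describes, and Figures~\ref{fig:iterative}--\ref{fig:binaryExamples} display the $t=2$ instance via Pascal's triangle.

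The step carrying the real content is proving $\phi$ is well defined and bijective onto the claimed interval, which I would do by induction on $n$ using the multinomial Pascal recurrence
\[
\binom{n}{f_1,\ldots,f_t}=\sum_{j=1}^{t}\binom{n-1}{f_1,\ldots,f_j-1,\ldots,f_t},
\]
with the convention that terms having a negative entry are $0$. The first symbol partitions $M$ into blocks indexed by $j$, the $j$-th block having size $\binom{n-1}{f_1,\ldots,f_j-1,\ldots,f_t}$; by construction $\phi$ maps this block onto the consecutive integer interval starting at $\sum_{i<j}\binom{n-1}{f_1,\ldots,f_i-1,\ldots,f_t}$, shifted by the rank of the length-$(n-1)$ suffix within the block, and that suffix rank is a bijection onto $\{0,\ldots,\binom{n-1}{\cdots}-1\}$ by the inductive hypothesis. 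Because the block sizes sum to $\binom{n}{f_1,\ldots,f_t}$ by the recurrence, the block images tile $\{0,\ldots,\binom{n}{f_1,\ldots,f_t}-1\}$ exactly, so $\phi$ is a bijection; decoding inverts it step by step (select the unique $j$ whose cumulative interval contains the residual index, emit $\alpha_j$, subtract the offset, decrement $g_j$).

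I expect the main obstacle to be purely organizational rather than deep: keeping the induction's bookkeeping honest at the boundary cases $g_i=0$ (a symbol that cannot appear must contribute a vanishing binomial term, which the stated recurrence handles automatically), and being careful that "$\log_2\binom{n}{f_1,\ldots,f_t}$ bits" is understood as the ceiling of that quantity while the strict inequality $<nH$ is what guarantees this never exceeds the Shannon budget. With those points addressed, the two parts combine to give the claim: lossless reconstruction using $\log_2\binom{n}{f_1,\ldots,f_t}<nH$ bits for all $n$ and all frequency vectors $(f_1,\ldots,f_t)$ with $\sum_i f_i=n$.
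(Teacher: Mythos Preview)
Your proposal is correct and follows essentially the same route as the paper: invoke Lemma~\ref{lemma:lemma1} (together with the identity in Equation~(\ref{eqn:3})) for the strict inequality, and appeal to the lexicographic-rank bijection on the multiset permutations for achievability. Your inductive verification of the bijection via the multinomial Pascal recurrence is more detailed than what the paper supplies in its proof (the paper simply asserts the bijection and defers the mechanics to the algorithmic sections), but the underlying idea is identical.
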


\begin{proof}
    Given the alphabet frequencies $(f_1,f_2,\cdots,f_t)$, due to basic combinatorics there can not be more that $\binom{n}{f_1,f_2,\cdots,f_t}$ permutations of the to-be-compressed-message. If a bijection is established between each permutation of the message and a number representing the lexicographic position of the permutation, no more than $\log_2 \binom{n}{f_1,f_2,\cdots,f_t}$ would be needed to represent any of the permutations. This lexicographic position along with alphabet frequencies would accurately regenerate the origional message. Due to Lemma-\ref{lemma:lemma1}   $1/n \log_2 \binom{n}{f_1,f_2,\cdots,f_t}$ is always smaller than $H$. The difference is more prominent for small $n$. It is also substantial when one or two alphabets occur more frequently than others. For large $n$ and uniform $f_i$, the lemma approximates equality.
\end{proof}

\section{Data Compression for Binary Source}
The data compression for combianatoric encoding is first explained by a binary source with alphabet $\mathbf{A}=\left\{a=0,b=1\right\}$. Figure-\ref{fig:binaryExamples} shows a few examples of combinatorial encoding. The figure proves the bijection of message permutations and its lexicographic order using the Pascal's triangle. Each node in the directed acyclic graph can be reached from root in $\binom{n}{f_0,f_1}a^{f_0}b^{f_1}$ ways, where $n$ is the level of the node in the graph and $f_0$ and $f_1$ are the number of left and right turn needed to reach the node, respectively. Here the binomial coefficient is written in more generic form using $\binom{n}{k}=\binom{n}{k,n-k}$. It will help extend the discussion to multinomial coefficients. The edge wigths between two nodes are given by the eqation

\begin{algorithm}[] 
\caption{Combinatorics Encoder  for Binary Alphabet} \label{algo:encoder}
\KwData{Source message $S$ of length $n$}
\KwOut{Lexicographic index: $L_i$,  Frequencies: $f_0,f_1$}
$L_i\longleftarrow 0$ \;
$f_0\longleftarrow 0$ \;
$f_1\longleftarrow 0$ \;
\For{$i\longleftarrow 0$ \KwTo $n-1$}
{
\eIf{$S[i]==1$ }
{
	\If{$i>f_1$}{$L_i \longleftarrow L_i + \binom{i}{f_1}$}
	$f_1\longleftarrow f_1+1$\;
}{$f_0\longleftarrow f_0+1$ \; }
}
\end{algorithm}

\begin{eqnarray}
W_{\text{Left Turn}} \left\{ \binom{n-1}{f_0-1,f_1}\rightarrow\binom{n}{f_0,f_1} \right\}&=& \text{Zero}\\
W_{\text{Right Turn}}\left\{ \binom{n-1}{f_0,f_1-1}\rightarrow\binom{n}{f_0,f_1} \right\}&=& \binom{n-1}{f_0-1,f_1}   \nonumber
 \end{eqnarray} 

Each path traversed with above edge weights will give a unique path cost, refered to here as lexicographic number of the path.

 Compression is achieved by calculating the lexicographic order of the input message. For example if the input message is [11011100101], its lexicographic order or the compressed representation can be computed using Algorithm-\ref{algo:encoder}:
\begin{eqnarray}\nonumber
L_i &=&\binom{10}{7} +\binom{9}{6} +\binom{7}{5} +\binom{6}{4} +\binom{5}{3} +\binom{2}{2} +\binom{0}{1}\\ \nonumber
&=&120+84+21+15+10+1+0 \\ \nonumber
&=&251 \nonumber
\end{eqnarray}
The number 251 can be represented with 8-bits, a saving of three bits. For decompression we will need the information $(L_i=251,~f_1=7,~f_0=4)$.

For easy hand calculations the above algorithm can be summarized as follows:

\begin{itemize}
\item Index message bits from zero starting at Least Significant Bit (LSB) of the message. Let $i$ be the bit index.
\item Start a counter $j$ from LSB side and increment wheneveve a ONE is encountered. If LSB start with ONE then first value of $j$ would be one.
\item Compute lexicographic order of the message as $\sum_{i=0}^{n-1}b_i\binom{i}{j}$, where $b_i$ is the $i^{th}$ bit and $\binom{i}{j}=0$ for $i<j$
\end{itemize}

It should be noted lexicographic order remains unchanged for any number of zeros on left of message MSB. It helps in representing the bits in more compact form. However zeros to the right of LSB are significant and they effect the lexicographic order.

For lossless decompression of the message we need the lexicographic order of the message with given frequencies of the alphabets.

\begin{figure*}
\centering
   \includegraphics[width=.8\textwidth]{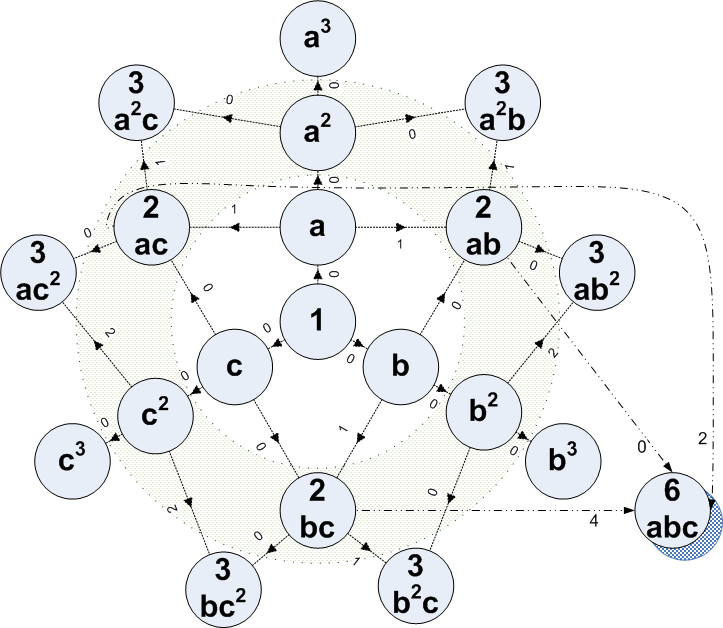} 
   \caption{Messages constructed by t-alphabet are encoded using acyclic directed graph of Pascal simplex in t-dimensions. For three alphabet we get a tetrahedron. }
     \label{fig:trinomial}
  \end{figure*}


\begin{algorithm}[] 
\caption{Data Reconstruction Algorithm} \label{algo:decoder}
\KwIn{Lexicographic index: $L_i$,  Frequencies: $f_0,f_1$}
\KwOut{Data array $S[~]$}
\While{$(f_0+f_1)\geq 0$}
{
$n \longleftarrow (f_0+f_1-1)$ \;
\eIf{$L_i\geq \binom{n}{f_1}$}
{
$L_i \longleftarrow L_i - \binom{n}{f_1}$ \;
$S[n] \longleftarrow 1$ \;
$f_1 \longleftarrow f_1 -1$\;
}
{
$S[n] \longleftarrow 0$\;
$f_0 \longleftarrow f_0 -1$\;
}
}
\end{algorithm}

Using Algorithm-\ref{algo:decoder} the compressed data $(L_i=251,~f_1=7,~f_0=4)$ can be decompressed as follows.

\begin{eqnarray*} \nonumber
251 &= & \mathbf{1}  \cdot \binom{10}{7} + 131  \longrightarrow  \mathbf{[1\cdots]} \\ \nonumber
131 &= & \mathbf{1}  \cdot \binom{9}{6} + 47  \longrightarrow  \mathbf{[11\cdots]} \\ \nonumber
47 &= & \mathbf{0}  \cdot \binom{8}{5} + 47  \longrightarrow  \mathbf{[110\cdots]} \\ \nonumber
47 &= & \mathbf{1}  \cdot \binom{7}{5} + 26  \longrightarrow  \mathbf{[1101\cdots]} \\ \nonumber
26 &= & \mathbf{1}  \cdot \binom{6}{4} + 11  \longrightarrow  \mathbf{[11011\cdots]} \\ \nonumber
11 &= & \mathbf{1}  \cdot \binom{5}{3} + 1  \longrightarrow  \mathbf{[10111\cdots]} \\ \nonumber
1 &= & \mathbf{0}  \cdot \binom{4}{2} + 1  \longrightarrow  \mathbf{[101110\cdots]} \\ \nonumber
1 &= & \mathbf{0}  \cdot \binom{3}{2} + 1  \longrightarrow  \mathbf{[1011100\cdots]} \\ \nonumber
1 &= & \mathbf{1}  \cdot \binom{2}{2} + 0  \longrightarrow  \mathbf{[10111001\cdot\cdot]} \\ \nonumber
0 &= & \mathbf{0}  \cdot \binom{1}{1} + 0  \longrightarrow  \mathbf{[101110010 \cdot]} \\ \nonumber
0 &= & \mathbf{1}  \cdot \binom{0}{1} + 0  \longrightarrow  \mathbf{[1011100101]}  \nonumber
\end{eqnarray*}

\section{Data Compression For t-Alphabet Source}
The data compression for multi alphabet source is achieved on the same principes as used by the binary source. An acyclic directed graph can be constructed in this case also. The major difference is that there are now $t$ outwards paths from each node, connecting to next $t$ nodes. Edge weights are weighted multinomial coeffiecients instead of binomial coefficients. If we have the alphabet as $(0,1,2,\cdots,t-1)$, then each node in the directed graph has the data string $\binom{n}{f_1,\cdots,f_t}\alpha_1^{f_1}\cdots\alpha_t^{f_t}$. Figure-\ref{fig:trinomial} shows the graph for three alphabet case.

The nodes on each level have labels
\begin{eqnarray}
\sum_{f_1+f_2+\cdots+f_t=n}\binom{n}{f_1,f_2,\cdots,f_t} \alpha_1^{f_1}\alpha_2^{f_2}\cdots\alpha_t^{f_t}\
\end{eqnarray}

The weight of the directed edge connecting two nodes $_{\binom{n-1}{f_1,\cdots,f_k-1\cdots,f_t} \alpha_1^{f_1}\cdots \alpha_k^{f_k-1}\cdots\alpha_t^{f_t} \rightarrow \binom{n}{f_1,\cdots,f_k,\cdots,f_t} \alpha_1^{f_1}\cdots \alpha_k^{f_k}\cdots\alpha_t^{f_t} }$ is given by

\begin{eqnarray}\medmath
W_{  \binom{n-1}{f_1,\cdots,f_k-1\cdots,f_t} \rightarrow \binom{n}{f_1,\cdots,f_k,\cdots,f_t} }  = \sum_{j=1}^{k-1}\binom{n-1}{f_1,\cdots,f_j-1,\cdots,f_t}
\end{eqnarray}

\begin{figure*}
\centering
  \begin{subfigure}[b]{.4\textwidth}
   \includegraphics[width=\textwidth]{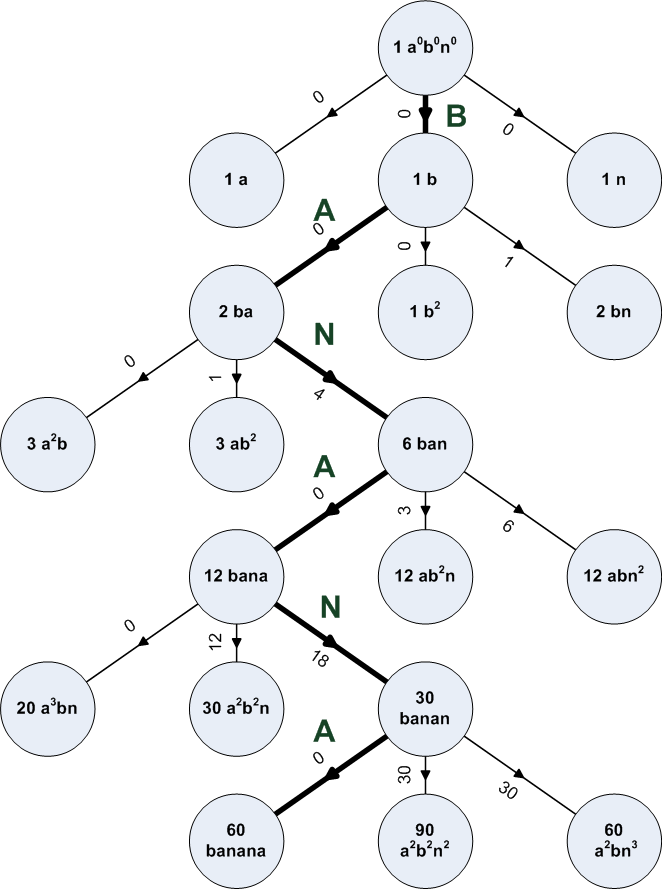}
   \label{subfig:b} 
   \caption{$[B A N A N A]$ mapped to lexicographic index [22], \\ Pascal tetrahedron traversal looking at next nodes.}
  \end{subfigure}
 ~
  \begin{subfigure}[b]{.4\textwidth}
   \includegraphics[width=\textwidth]{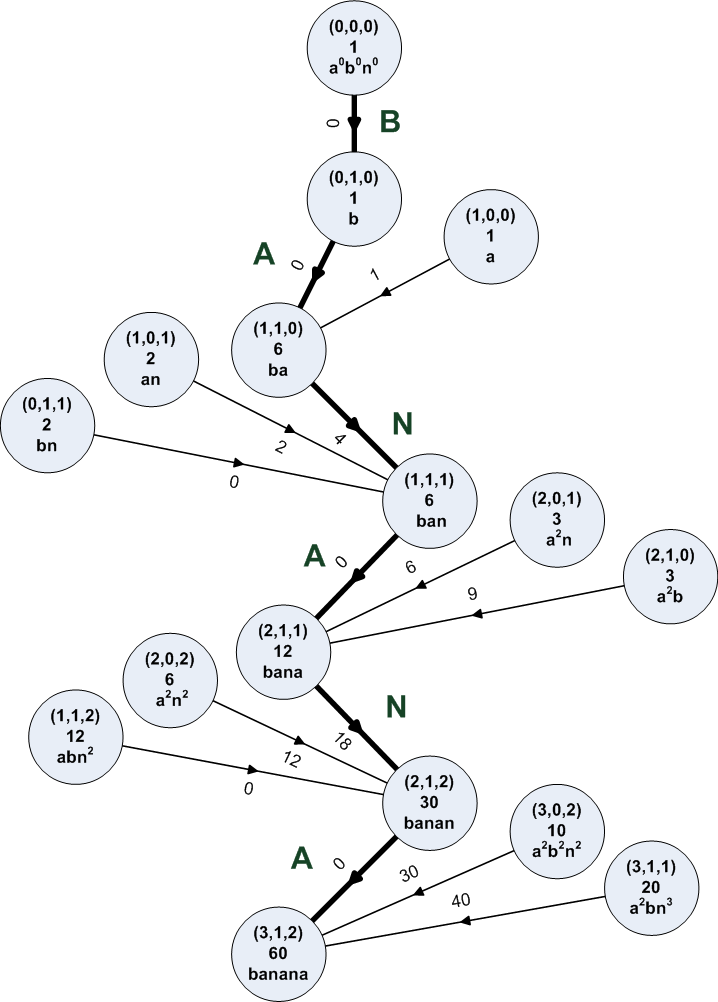}
   \label{subfig:b} 
   \caption{$[B A N A N A]$  mapped to lexicographic index [22], \\ Pascal tetrahedron traversal looking at previous  nodes.}  
\end{subfigure}

 \caption{Input stream $[B A N A N A]$ is encoded by traversal in directed acyclic graph of Pascal's tetrahedron as there are three alphabets. compressed to [10110]. $6$ edges are traversed, taking directions according to input alphabet.}
\label{fig:binaryExamples}
\end{figure*}

\subsection{Encoding Example for Tertiary Alphabet:  \textbf{B A N A N A}}

All possible permutations of the letters in BANANA are
$$P(n,f_1,f_2,f_3)= \binom{6}{3,1,2} = \frac{6!}{3! 2! 1!}= \frac{6\cdot 5\cdot 4\cdot 3\cdot 2\cdot 1}{3\cdot 2\cdot 1\cdot 1\cdot 2\cdot 1}=60$$

So there are merely 60 different permutations. If we consider dictionary entry number of each of these permutations, any permutation can be represented by a binary number of no more than 6-bits length. That corresponds to one bit per symbol which is considerable compression as compared to 2-bits/symbol required for ordinary binary encoding.


\begin{table*}[t]
\centering 
\caption{All the permutations of word \textbf{B A N A N A}: $\lceil \log_2\binom{6}{3,1,2}\rceil=6$ bit long binary code along with knowledge of character frequencies $(f_A=3,f_B=1,f_N=2)$  is sufficient to uniquely represent any permutation.}
\label{tab:ExampleCode} \small
\begin{tabular}{l|c|r||l|c|r||l|c|r}
 \hline \hline 
Lexicogra- & Permutation & Code & Lexicogra- & Permutation & Code & Lexicogra- & Permutation & Code\\
phic Index &  &  & phic Index &  &  & phic Index &  & \\
\hline\hline
 &  &  &  &  &  &  &  & \\
0& nnbaaa &1&20& nabana &10101&40& nbaaan & 101001\\
1& nbnaaa &10&21& anbana &10110&41& bnaaan & 101010\\
2& bnnaaa &11&22& \textbf{banana} &10111&42& nabaan & 101011\\
3& nnabaa &100&23& abnana &11000&43& anbaan & 101100\\
4& nanbaa &101&24& naabna &11001&44& banaan & 101101\\
& & && & && & \\
5& annbaa &110&25& anabna &11010&45& abnaan & 101110\\
6& nbanaa &111&26& aanbna &11011&46& naaban & 101111\\
7& bnanaa &1000&27& baanna &11100&47& anaban & 110000\\
8& nabnaa &1001&28& abanna &11101&48& aanban & 110001\\
9& anbnaa &1010&29& aabnna &11110&49& baanan & 110010\\
& & && & && & \\
10& bannaa &1011&30& nnaaab &11111&50& abanan & 110011\\
11& abnnaa &1100&31& nanaab &100000&51& aabnan & 110100\\
12& nnaaba &1101&32& annaab &100001&52& naaabn & 110101\\
13& nanaba &1110&33& naanab &100010&53& anaabn & 110110\\
14& annaba &1111&34& ananab &100011&54& aanabn & 110111\\
& & && & && & \\
15& naanba &10000&35& aannab &100100&55& aaanbn & 111000\\
16& ananba &10001&36& naaanb &100101&56& baaann & 111001\\
17& aannba &10010&37& anaanb &100110&57& abaann & 111010\\
18& nbaana &10011&38& aananb &100111&58& aabann & 111011\\
19& bnaana &10100&39& aaannb &101000&59& aaabnn & 111100\\
 &  &  &  &  &  &  &  & \\
\hline
\end{tabular}
\end{table*}


\section{Conclusion}

An entropy encoding technique based on permutation index representation of the message string is presented. The encoding and decoding algorithms work by assigning integer weights to symbol place value based on multinational coefficients. Encoding process use only addition and subtraction and unlike arithmetic or range encoding require no multiplication or division operations. With pre-calculated binomial or multinomial coefficients the method achieves the compression accuracy of arithmetic coding at the operation speed of Huffman coding. The binary messages need binomial coefficients as digit place values while multinomial coefficients are used for higher cardinality alphabet sets.

\bibliographystyle{IEEEtran}
\bibliography{reference}

\begin{thebibliography}{1}
\providecommand{\url}[1]{#1}
\csname url@samestyle\endcsname
\providecommand{\newblock}{\relax}
\providecommand{\bibinfo}[2]{#2}
\providecommand{\BIBentrySTDinterwordspacing}{\spaceskip=0pt\relax}
\providecommand{\BIBentryALTinterwordstretchfactor}{4}
\providecommand{\BIBentryALTinterwordspacing}{\spaceskip=\fontdimen2\font plus
\BIBentryALTinterwordstretchfactor\fontdimen3\font minus
  \fontdimen4\font\relax}
\providecommand{\BIBforeignlanguage}[2]{{%
\expandafter\ifx\csname l@#1\endcsname\relax
\typeout{** WARNING: IEEEtran.bst: No hyphenation pattern has been}%
\typeout{** loaded for the language `#1'. Using the pattern for}%
\typeout{** the default language instead.}%
\else
\language=\csname l@#1\endcsname
\fi
#2}}
\providecommand{\BIBdecl}{\relax}
\BIBdecl

\bibitem{shannon2001mathematical}
C.~E. Shannon, ``A mathematical theory of communication,'' \emph{ACM SIGMOBILE
  Mobile Computing and Communications Review}, vol.~5, no.~1, pp. 3--55, 2001.

\bibitem{huffman1952method}
D.~A. Huffman \emph{et~al.}, ``A method for the construction of minimum
  redundancy codes,'' \emph{Proceedings of the IRE}, vol.~40, no.~9, pp.
  1098--1101, 1952.

\bibitem{witten1987arithmetic}
I.~H. Witten, R.~M. Neal, and J.~G. Cleary, ``Arithmetic coding for data
  compression,'' \emph{Communications of the ACM}, vol.~30, no.~6, pp.
  520--540, 1987.

\bibitem{martin1979range}
G.~N.~N. Martin, ``Range encoding: an algorithm for removing redundancy from a
  digitised message,'' in \emph{Proc. Institution of Electronic and Radio
  Engineers International Conference on Video and Data Recording}, 1979.

\end{thebibliography}
\end{document}